\DeclareMathOperator{\tr}{tr}
\DeclareMathOperator{\erf}{erf}
\let\originalleft\left
\let\originalright\right
\renewcommand{\left}{\mathopen{}\mathclose\bgroup\originalleft}
\renewcommand{\right}{\aftergroup\egroup\originalright}
\newcommand{\bra}[1]{\left\langle #1 \right|}
\newcommand{\ket}[1]{\left| #1 \right\rangle}
\newcommand{\braket}[2]{\left\langle #1 \middle| #2 \right\rangle}
\newcommand{\ketbra}[2]{\left|#1\middle\rangle\middle\langle#2\right|}
\newcommand{\exval}[3]{\left\langle #1 \middle| #2 \middle| #3 \right\rangle}
\newcommand{\norm}[1]{\left\|#1\right\|}
\newcommand{\abs}[1]{\left|#1\right|}
\newcommand{\mean}[1]{\left\langle#1\right\rangle}
\newcommand{\comm}[2]{\left[#1,#2\right]}
\newcommand{\de}[1]{\left(#1\right)}
\newcommand{\DE}[1]{\left\{#1\right\}}
\newcommand{\chsh}{\mathcal{B}}
\newcommand{\Hi}{\mathcal{H}}
\newcommand{\id}{\mathbb{1}}
\newcommand{\R}{\mathbb{R}}
\DeclareMathOperator{\dint}{d\!}
\newcommand{\ie}{\emph{i.e.}\@\xspace}
\newcommand{\etal}{\emph{et al.}\@\xspace}
\newtheorem{theorem}{Theorem}
\newtheorem{lemma}[theorem]{Lemma}
\begin{document}
\title{Maximal violations and efficiency requirements for Bell tests with photodetection and homodyne measurements}

\author{Marco Túlio Quintino} 
\author{Mateus Araújo} 
 \affiliation{Departamento de Física, Universidade Federal de Minas Gerais,
 Caixa Postal 702, 30123-970, Belo Horizonte, MG, Brazil}
\author{Daniel Cavalcanti} 
\affiliation{Centre for Quantum Technologies, National University of Singapore, 3 Science drive 2, Singapore 117543}
\author{Marcelo França Santos}
 \affiliation{Departamento de Física, Universidade Federal de Minas Gerais,
 Caixa Postal 702, 30123-970, Belo Horizonte, MG, Brazil}
\author{Marcelo Terra Cunha} 
 \affiliation{Departamento de Matemática, Universidade Federal de Minas Gerais,
 Caixa Postal 702, 30123-970, Belo Horizonte, MG, Brazil}
\date{\today}

\begin{abstract}
	We study nonlocality tests in which each party performs photodetection and homodyne measurements. The results of such measurements are dichotomized and a Clauser-Horne-Shimony-Holt (CHSH) inequality is used. We prove that in this scenario the maximal violation is attainable and fully characterize the set of maximally violating states. If we restrict our search to states composed of at most $2$, $4$, and $6$ photons per mode, we find critical photodetection efficiencies of $0.48$, $0.36$, and $0.29$. We also found an entangled variation of the famous cat states that has critical efficiency $0.32$. These values are well within the limit of current photodetector technology, which suggests the present approach as a road for a loophole-free Bell experiment.
\end{abstract}

\maketitle

\section{Introduction}

	Since Bell proved his theorem in 1964 \cite{bell64}, there has been considerable interest in experimentally ruling out local hidden variables models.  Although Aspect's 1982 experiment \cite{aspect82} gave a strong evidence in favor of the existence of nonlocal correlations,
	it relied on the \emph{fair sampling} assumption, thus opening up the possibility of a local hidden variables description \cite{pearle70} for his experiment.

 	From a fundamental point of view there is no reason to believe that nature maliciously disrespects fair sampling. However, the recent advent of device-independent protocols \cite{ekert91,acin07,pironio10,colbeck11,bardyn09,rabelo11} gave an additional motivation to search for a loophole free Bell test.
In this case one may be fighting against an active opponent who can use the undetected photons to crack a given protocol. Hence, closing the detection loophole is a requirement for a demonstration of device-independent quantum information processing.

	In the standard Bell test using discrete variables and the Clauser-Horne-Shimony-Holt (CHSH) inequality \cite{ch74}, an overall detection efficiency higher than $2/3$ is required to close the detection loophole \cite{eberhard93,larsson01}. More recently, it was shown that the use of
higher dimensional entangled states (and other inequalities) can lower this requirement~\cite{Tamas-Stefano-Nicolas}. However, these experimental situations are still very demanding.
	
	An alternative method proposed to close the detection loophole in photonic systems is the use of homodyne measurements \cite{reid98}, which can be made very efficient. However, earlier results relying only on homodyne measurements required unfeasible states \cite{munro99,wenger03,ecava07,acin09,he10} or displayed very small violations \cite{nhacar04,grangier04}, indicating that homodyning alone may not render the definite Bell test. 
	
	More recently, Cavalcanti \etal\ explored a hybrid Bell test scenario that combines photodetection and homodyne measurements. An experimentally reasonable violation of a CHSH inequality was found in a setup involving a feasible state, although with detection efficiencies still comparable to the best numbers found in the discrete variable cases \cite{cavalcanti10}. 
	
	The main goal of this paper is to show hybrid schemes that overcome Cavalcanti \etal's result in two senses: larger violations and lower required efficiencies. First, we prove that the maximal violation of the CHSH inequality can indeed be found within the hybrid scenario. Moreover we fully characterize the set of states that attain this maximal value. Second, we study the robustness of the CHSH violations of natural classes of quantum states for several sources of errors (photodetection inefficiency, transmission losses, and dark counts). We demonstrate the existence of states that achieve both a large violation of the CHSH inequality and a high resistance to detection inefficiencies.  
	
	We organize our paper as follows:
	\begin{enumerate}
	\item Sec.\ \ref{sec:methods} introduces the standard Bell test scenario and the CHSH inequality.
	\item Sec.\ \ref{sec:hybrid} describes the hybrid measurements scenario, which involves homodyne measurement and photodetection in each side of the Bell test.
	\item In Sec.\ \ref{sec:maximal} we show that maximal violation of the CHSH inequality can be obtained in the present scenario and characterize the family of states achieving such violation. 
	\item As the family of states achieving maximal CHSH violation is physically unreasonable, we provide, in Sec.\ \ref{sec:restriction}, a study of other families of quantum states previously discussed in the literature, such as N$00$N states and truncated-Fock states. We show that some of these states can provide quite high CHSH violation. 
	\item Sec.\ \ref{sec:loophole} studies typical errors involved in the Bell test, such as detection inefficiencies, transmission losses, and dark counts. In special we find some quantum states which are very robust against photodetection inefficiency.
	\item Sec.\ \ref{sec:multipartite} briefly discusses the multipartite case.
	\item Finally Sec.\ \ref{sec:discussion} is devoted to some discussions and future directions. 
	\end{enumerate}
	


\section{The CHSH scenario}\label{sec:methods}

	Consider two parties, Alice and Bob, who can perform two possible measurements of two outcomes each. Alice's measurements will be labelled by $A_i$ ($i=0,1$) and can return possible results $a_i=\pm1$. Similarly, Bob can choose measurements $B_j$ ($j=0,1$) with possible outcomes $b_j=\pm 1$. The CHSH inequality imposes a constraint on the correlations attainable by any local hidden-variable theory, and can be expressed as
	\[\abs{E_{00} +E_{01} + E_{10} -E_{11}}\leq2,\]
	where the correlations $E_{ij}=p(a_i=b_j|A_i, B_j)-p(a_i\neq b_j|A_i, B_j)$, being $p(a_i=b_j|A_i, B_j)$ the probability that the outcomes of Alice and Bob are equal if measurements $A_i$ and $B_j$ are chosen.

	In quantum mechanics we can write the correlation terms as $E_{ij}=\tr(\rho A_i\otimes B_j)$, where $A_i$ and $B_j$ are quantum observables with eigenvalues $\pm 1$ and $\rho$ is the quantum state of the bipartite system.  
	Thus, the CHSH inequality can be written, within quantum mechanics, as
	\[
	\abs{\mean{\chsh}}\leq 2,
	\]
	where \[\chsh:=A_0\otimes B_0 + A_0\otimes  B_1 + A_1\otimes B_0 - A_1\otimes  B_1\] is the CHSH operator.
		
	The advantage of defining the operator in this way is that to find the state $\ket{\psi}$ that maximally violates the CHSH inequality one only has to find the norm of the CHSH operator and its corresponding eigenvector, so an unstructured search in the state space is unnecessary. For a more generous introduction we suggest \cite{wernerwolf}.

\section{CHSH with photodetection and homodyne measurements}\label{sec:hybrid}

	As explained before, the CHSH scenario involves two measurements of two outcomes per party. Here we are interested in the case where the observables chosen by Alice and Bob are given by the $X$ quadrature $X=\int_{-\infty}^{\infty} x\ketbra{x}{x}\dint x$ and the number of photons $N =\sum_{n=0}^{\infty}n\ketbra{n}{n}$, where $\ket{n}$ is a Fock state. 

	Both observables have an infinite number of possible outcomes, so we need to do a binning process in order to use them in a CHSH test, that is, map their outcomes into $+1$ and $-1$. The dichotomic version of the $N$ operator is the detection operator $D$, defined as
	\begin{subequations}
	\begin{equation} D := P_{D+} -P_{D-},\end{equation}
	where
	\begin{equation}
	P_{D+} := \sum_{n=1}^{\infty} \ketbra{n}{n}\quad\text{and}\quad P_{D-} := \ketbra{0}{0},
	\end{equation}
	\end{subequations}
	for which a click outputs the value $+1$ and the absence of a click outputs the value $-1$. This definition has a very clear physical motivation, since $D$ models photodetectors used for low intensity fields.
	
	For the $X$ operator we will define a dichotomic operator $Q$ that will output $+1$ if the $X$ measurement returns a value of $x$ inside a set $A^+$, and $-1$ if it returns a value in the complement $A^- = \mathbb{R}\setminus A^+$. So we define the operator $Q$ as 
	\begin{subequations}\label{eq:qdef}
	\begin{equation} Q := P_{Q+} - P_{Q-},\end{equation} 
	where 
	\begin{equation}
	P_{Q\pm}:=\int_{A^\pm} \ketbra{x}{x}\dint x.
	\end{equation}
	Note that $P_{D+} + P_{D-} = \id = P_{Q+} + P_{Q-}$.

	We can now calculate the associated matrix elements in the Fock basis:	
	\begin{align}
	\exval{m}{Q}{n} &= \exval{m}{P_{Q+}}{n}-\exval{m}{P_{Q-}}{n} \\
			&= 2\exval{m}{P_{Q+}}{n} -\delta_{mn} \\
			&= 2\int_{A^+}\varphi_m^*\varphi_n - \delta_{mn},
	\end{align}
	\end{subequations}
	where $\varphi_n(x)=\braket{x}{n}$ is the $n$th Hermite function, that is, the $n$th eigenstate of the $N$ operator in the position representation.
	
	As a matter of fact, note that the measurement operator $X$ restricted to the $\DE{\ket{0},\ket{1}}$ subspace after the sign binning process (\ie, $A^+=\R^+$), is given simply by
	\[ X = \sqrt{\frac{2}{\pi}}\sigma_x,\] 
	where $\sigma_{x}$ is the standard Pauli matrix. Also note that the measurement operator of an arbitrary quadrature $\cos(\theta)X+\sin(\theta)P$ in the same subspace and considering the same binning, where $P$ is the quadrature orthogonal to $X$, is given by
	\[ \sqrt{\frac{2}{\pi}}\de{\vphantom{\sqrt{\frac{2}{\pi}}}\cos(\theta)\sigma_x+\sin(\theta)\sigma_y}.\] In other words, if one applies the sign binning and deals with states in the subspace $\DE{\ket{0},\ket{1}}$, measuring a quadrature is equivalent to performing a spin measurement in the $XY$ plane. This fact will be useful, for instance, to study the violation of Bell inequalities in the multipartite scenario (see section \ref{sec:multipartite}).

	Using $Q$ and $D$, we can now define an operator
	\begin{equation}\label{eq:chsh}
	\chsh(A^+) := Q\otimes Q + Q \otimes D + D \otimes Q- D \otimes D
	\end{equation}
	so that the problem of finding the maximum violation of the corresponding inequality reduces to finding\footnote{Where $\norm{A}:=\sup{_{\norm{\ket{\psi}}=1}\norm{A\ket{\psi}}}$. In finite dimensions $\norm{A}$ is just the largest singular value of $A$.}
	\[
	\sup_{A^+} \norm{\chsh(A^+)},
	\]
	in other words, the choice of binning that maximizes the norm of the CHSH operator.

	To actually solve this maximization problem we need to search through generic subsets of $\mathbb{R}$, which is a difficult task. Therefore we will choose $A^+$ to be an interval, with arbitrary endpoints\footnote{Unstructured search indicates that this is usually the optimal case, or very near to it.}. In this case we can even evaluate $\chsh$ explicitly, simplifying the numerical work involved. In principle we could have considered different sets and hence different binnings for Alice and Bob, but in all our calculations we found no advantage in doing so.

\section{Maximal violations}\label{sec:maximal}

	In this section we establish a direct connection between the maximal reachable violation and the binning choice (set $A^+$). In appendix \ref{sec:russo} we prove that states that attain maximal violations always belongs to the subspace generated by $\DE{\ket{0},\ket{\Xi}}^{\otimes 2}$, where 
\[ \ket{\Xi} := \frac{1}{\sin\theta}\sum_{n=1}^\infty 2\int_{A^+}\varphi_0\varphi_n\ket{n} \]
and $\theta \in (0,\pi)$ is a function of the binning, defined via
\[ \cos\theta := 2\int_{A^+}\varphi_0^2-1.\]

 	Restricted to $\operatorname{span}\DE{\ket{0},\ket{\Xi}}$ and written in this orthonormal basis, our observables take the form
\begin{align*} 
D_R &= -\sigma_z, \\
Q_R &= \cos\theta\,\sigma_z + \sin\theta\,\sigma_x,
\end{align*}
	and using the Khalfin-Tsirelson-Landau identity\footnote{If $A_i^2=B_j^2=\id$, $\chsh^2=4\id-[A_0,A_1]\otimes[B_0,B_1].$} \cite{tsirelson85,landau87b} we can check that 	
\begin{equation}\label{eq:max}
\norm{\chsh(A^+)}= 2\sqrt{1+\sin^2\theta}.
\end{equation}
	We remark that the states that attain this violation can be easily found by diagonalizing the CHSH operator restricted to $\operatorname{span}\DE{\ket{0},\ket{\Xi}}^{\otimes 2}$, that is, a $4\times4$ matrix.  

	With equation \eqref{eq:max}, we see that the maximal CHSH violation $2\sqrt{2}$ \cite{tsirelson80} can be reached iff $\int_{A^+} \varphi_0^2 = 1/2$, and for these binnings, CHSH eigenstates are
\begin{subequations}\label{eq:pipm}
	\begin{equation}
	\ket{\pi_\pm} := \frac{\sqrt{2\pm\sqrt{2}}}{2}\ket{\psi_+} \mp \frac{\sqrt{2\mp\sqrt{2}}}{2} \ket{\phi_-},
	\end{equation}
	with $\exval{\pi_\pm}{\chsh}{\pi_\pm} = \pm2\sqrt{2}$, where
	\begin{equation}
	\ket{\psi_+}:=\frac{\ket{\Xi 0}+\ket{0\Xi}}{\sqrt{2}},\quad\ket{\phi_-}:=\frac{\ket{00}-\ket{\Xi\Xi}}{\sqrt{2}}.
	\end{equation}
	\end{subequations}

	\section{Physically motivated classes of states}\label{sec:restriction}

	In the previous section, we showed that maximal CHSH violation is attainable. In order to understand the state $\ket{\Xi}$, it is useful to explore the binning $A^+ = \R^+$, where the integrals in $\ket{\Xi}$ are easy to solve. We can then write explicitly \[ \ket{\Xi_{\mathbb{R}^+}} = 2\sum_{n=0}^\infty \frac{(-1)^n (2n)!}{\sqrt{2\pi(2n+1)!} \; 2^n n!}\ket{2n+1}. \]
	Its asymptotic is then given by
	\[ \ket{\Xi_{\mathbb{R}^+}} \sim \frac{2^{3/4}}{\sqrt{\zeta\de{\frac{3}{2}}\de{2\sqrt{2}-1}}} \sum_{n=0}^\infty \frac{(-1)^n}{(2n+1)^{\frac{3}{4}}}\ket{2n+1}.\]
	First note that in this state only the odd Fock states appear. This comes from the fact that $\phi_0\phi_n$ is an even function for even $n$, which makes $\int_{\mathbb{R}^+}\phi_0\phi_n= 1/2 \int_{\mathbb{R}}\phi_0\phi_n= 0$ due to the orthogonality of the Hermite functions. Second, we  see that its representation in the Fock basis goes \emph{polynomially} to zero, causing problems like the divergence of the mean number of photons $\exval{\Xi_{\mathbb{R}^+}}{N}{\Xi_{\mathbb{R}^+}}$. In fact, we have numerical evidence that this divergence occurs for \emph{any} choice of binning, forcing us to conclude that states defined in the subspace $\DE{\ket{0},\ket{\Xi}}^{\otimes 2}$ are unfeasible. From now on we look for restricted but physically sound families of states. 

	\subsection{Truncated Fock spaces}

	As a first example we calculate numerically the largest violation of the CHSH inequality given a maximum number $N$ of photons per mode. The results are shown in table \ref{tab0toN}. We chose to omit the states, since our objective is to illustrate the relation between the attainable violation and the size of the subspace.

	\begin{table}[ht]
	\begin{tabular}{ccc}
		Subspace & $\abs{\mean{\chsh}}$  &Set $A^+$\\
		$\Hi_1$ & $2.29$ & $[-0.10,\infty)$\\
		$\Hi_2$ & $2.46$ & $[-0.08,\infty)$ \\
		$\Hi_4$ & $2.56$ & $[-0.05,\infty)$ \\
		$\Hi_{6}$ & $2.61$ & $[-0.04,\infty)$ \\
		$\Hi_{12}$ & $2.67$ & $\R^+$ \\
		$\Hi_{18}$ & $2.70$ & $\R^+$ \\
		$\Hi_{36}$ & $2.74$ & $\R^+$ \\
		$\Hi_{100}$ & $2.77$ & $\R^+$ \\
	\end{tabular}
	\caption{Numercial maximal violation of CHSH for states in the subspace $\Hi_N = \DE{\ket{0},\ldots,\ket{N}}^{\otimes2}$.}
	\label{tab0toN}
	\end{table}

	\subsection{\texorpdfstring{The $N00N$ states}{The N00N states}}

	Another natural restriction is to consider the so-called $N00N$ states, defined as $\ket{N00N}:=(\ket{N0}+\ket{0N})/\sqrt{2}$. In fact, the particular case of $N=2$ was studied before in \cite{cavalcanti10}. The use of $N00N$ states puts some constraints in the expected value of the CHSH operator, particularly because $\mean{D\otimes D}=-1$ for these states. This constraint (perhaps counterintuitively) forces\footnote{We thank N. Brunner for pointing out this fact to us.} $\abs{\mean{\chsh}}\leq 5/2$. A proof of this fact is shown in the appendix \ref{sec:skrzypczyk}.

	The expected value of the CHSH operator for $N00N$ states is given by
	\begin{multline*}
	\mean{\chsh}_{N00N} = 2 + 4\de{\int_{A^+}\varphi_0\varphi_N}^2 \\- 4\int_{A^+} \varphi_N^2\de{1-\int_{A^+}\varphi_0^2}.
	\end{multline*}
	We proceed to show an upper bound to this expected value. First note that by the orthogonality of the Hermite functions
	\begin{equation}\label{maxap}
	\max_{A^+} \int_{A^+}\varphi_0\varphi_N = \frac{1}{2}\int_{\mathbb{R}} \abs{\varphi_0\varphi_N}.
	\end{equation}
	Also, $\int_{A^+} \varphi_N^2\de{1-\int_{A^+}\varphi_0^2}$ is always a nonnegative number, so 
	\begin{equation}\label{maxnoon}
		\mean{\chsh}_{N00N} \leq 2 + \de{\int_{\mathbb{R}} \abs{\varphi_0\varphi_N}}^2.			
	\end{equation}	

	For odd $N$ one can check that for a set $A^+ \subseteq \mathbb{R}$ that maximizes $\int_{A^+}\varphi_0\varphi_N$, $\int_{A^+} \varphi_0^2 = \int_{A^+}\varphi_N^2 = 1/2$ holds, so 
	\[
	\mean{\chsh}_{N00N} \leq 1 + \de{\int_{\mathbb{R}} \abs{\varphi_0\varphi_N}}^2.
	\]
	However $\int_{\mathbb{R}} \abs{\varphi_0\varphi_N} <1$ for all $N>0$, which suggests that it is impossible to violate CHSH with odd $N$, as checked numerically up to $N=7$. 

	For even $N$ we have found violations only for $N = 2$ and $N = 4$. The numerical maximal violations and the upper bound given by equation \eqref{maxnoon} are summarized in table \ref{tabMaxNOON}.
	
	\begin{table}[ht]
		\begin{tabular}{ccc}
			$N$ & $\abs{\mean{\chsh}}$ & Upper bound\\
			$2$ & $2.25$ & $2+\frac{4}{\pi e}$\\
			$4$ & $2.02$ & $2+\frac{4}{\pi e^3}\de{\sqrt{3}+3 \cosh\sqrt{6}-\sqrt{6} \sinh\sqrt{6}}$\\
			$6$ & $2$ & $\approx 2.26$ \\
		\end{tabular}
		\caption{Numerically found violations and analytical upper bounds for $N00N$ states.}\label{tabMaxNOON}
	\end{table}

	Our results shows that in the proposed scenario, $N00N$ states with high number of photons are not useful.

	\subsection{\texorpdfstring{States in the $\DE{\ket{0},\ket{N}}^{\otimes2}$ subspaces}{States in the \{0,N\} subspaces}}

	As a generalization of the $N00N$ states we considered states in the $\DE{\ket{0},\ket{N}}^{\otimes2}$ subspaces. The results are shown in table \ref{tab0N}, and the states in appendix \ref{sec:noonstates}.

	\begin{table}[ht]
	\begin{tabular}{llcc}
		State & Subspace & $\abs{\mean{\chsh}}$  &Set $A^+$\\
		$\ket{\chi_1}$ & $\DE{\ket{0},\ket{1}}^{\otimes2}$ & $2.29$ & $[-0.10,\infty)$\\
		$\ket{\chi_2}$ & $\DE{\ket{0},\ket{2}}^{\otimes2}$ & $2.34$ &$[-0.73,0.73]$\\
		$\ket{\chi_3}$ & $\DE{\ket{0},\ket{3}}^{\otimes2}$ & $2.09$ &$[0.10,1.17]$\\
		$\ket{\chi_4}$ & $\DE{\ket{0},\ket{4}}^{\otimes2}$ & $2.11$ &$[-0.49,0.49]$
	\end{tabular}
	\caption{Maximal violation of CHSH for states in the subspace  $\DE{\ket{0},\ket{N}}^{\otimes2}$.}
	\label{tab0N}
	\end{table}

	We see that the highest violation lies in the subspace $\DE{\ket{0},\ket{2}}^{\otimes2}$, and that the violation seems to decrease with $N$. To understand this result it is enlightening to look at the matrix representation of $Q$ restricted to these subspaces. It is a $2 \times 2$ matrix, with off-diagonal elements
	\[\int_{A^+}\varphi_0\varphi_N.\]
	Using equation \eqref{maxap} and making the asymptotic expansion of the \emph{rhs}
	\[ \int_{\mathbb{R}} \abs{\varphi_0\varphi_N} \sim \frac{2}{\pi}\sqrt[4]{\frac{8}{N\pi}}, \]
	we see that the off-diagonal elements are monotonically decreasing with $N$ and have limit $0$. So for large $N$ the observables $Q$ and $D$ are diagonal in the same basis, so they commute and there is no violation.

	This fact can be understood physically as the increasing distinguishability between $\varphi_0$ and $\varphi_N$ by the $Q$ measurement.

	\subsection{Cat-like states}

	Another idea is to approximate directly the maximally violating states \eqref{eq:pipm}. To do so we repeat their structure but replace the problematic $\ket{\Xi}$ with a well-behaved cat state \cite{jeong05}. The scheme is
	\begin{subequations}\label{eq:catoon}
	\begin{align}
	\ket{\Xi} & \mapsto \ket{\text{cat}} := \frac{\ket{\alpha}\pm\ket{-\alpha}}{\sqrt{2}\sqrt{1\pm e^{-2\abs{\alpha}^2}}},\\
	\ket{\psi_+} &\mapsto \ket{\text{cat}00\text{n}} := \frac{\ket{\text{cat}0}+\ket{0\text{cat}}}{\sqrt{2}},\\
	\ket{\phi_-} &\mapsto \ket{00\text{cat}} := \frac{\ket{00}-\ket{\text{cat}\text{cat}}}{\sqrt{2}},\\
	\ket{\pi} &\mapsto \ket{\Gamma_\pm} := \cos(\theta)\ket{\text{cat}00\text{n}} + \sin(\theta)\ket{00\text{cat}},
	\end{align}
	\end{subequations}
	where $\ket{\Gamma_+}$ is constructed with the even cat, and $\ket{\Gamma_-}$ is constructed with the odd cat. $\ket{\alpha}$ is the coherent state, defined as \[\ket{\alpha} := e^{-\abs{\alpha}^2/2}\sum_{n=0}^\infty \frac{\alpha^n}{\sqrt{n!}}\ket{n}.\] We now have two free parameters to optimize, $\theta$ and $\alpha$. The maximal violation for $\ket{\Gamma_+}$ is $\mean{\chsh} \approx -2.45$, reached with $\theta \approx 1.05$ and $\alpha \approx 2.06i$. For $\ket{\Gamma_-}$, the violation is $\mean{\chsh} \approx -2.51$, reached with $\theta \approx 1.18$ and $\alpha \approx 1.15i$.

\section{Requirements for closing the detection loophole}\label{sec:loophole}

	In the previous section, we studied the maximal attainable violation in the case of pure states and perfect measurements. Now, we shall consider a more realistic scenario, that includes losses and imperfect detections. In particular, we investigate the requirements needed to close the detection loophole. 

	Our approach splits the problem concerning the overall detection efficiency in two parts: the transmittance between the source and the detectors and the efficiency of the detectors. We are also going to consider an asymmetric measurement scenario: photodetectors with efficiency $\eta$ and homodyne measurements with efficiency $1$; after all, the main reason for using homodyne measurements in Bell tests is that they can be made very efficient. Note that this scenario is very similar to the observable-asymmetric scenario proposed in \cite{garbarino10}, where Garbarino found out that if the detection efficiency of one observable is $1$, the efficiency of the other can arbitrarily small and still produce a loophole-free Bell violation.

	In the following, we calculate the critical photodetector efficiency, dark count rates and transmittance required to guarantee a CHSH violation free of the detection loophole. This will be done by restricting our measurement operators to given subspaces and then numerically finding the optimal states.

\subsection{Photodetection efficiency}\label{sec:eta}

	We model the effect of having photodetectors with efficiency $\eta \le 1$ considering that the detection of each photon is an independent event \cite{mandel95}. So the probability that a photodetector clicks $(+)$ for the state $\ket{n}$ is just the complement of the probability that it fails to click for all photons. That is, 
	\[ p(+) = 1 - (1-\eta)^n. \]
In our scheme, this amounts to modifying the measurement operator $D$ by generalising its projectors to POVM elements:
	\begin{subequations}
	\begin{align}
		P_{D+} &\mapsto E_+:=\sum_{n=1}^{\infty}(1-(1-\eta)^n)\ket{n}\bra{n}, \\
		P_{D-} &\mapsto E_-:=\ket{0}\bra{0}+\sum_{n=1}^{\infty} (1-\eta)^n\ket{n}\bra{n}.
	\end{align}
	\end{subequations}
	So now we have $D_\eta:=E_+-E_-$. We remark that this new measurement is not projective anymore, so its outputs are not the eigenvalues of an observable. But the rules for the expected value are the same, so the maximal violation of the CHSH operator is still given by $\norm{\chsh(A^+,\eta)}$.

	In order to analyse the effects of inefficiency, we define a critical efficiency by 
	\[ \eta_c(A^+):=\inf_{\eta}{\{\eta:\norm{\chsh(A^+,\eta)} > 2\}}.\]

	Table \ref{CHSHinef} displays the states that minimize $\eta_c$ for given subspaces, their CHSH values for $\eta=1$ and their $\eta_c$. The states are in the appendix \ref{badstates}. 
	\begin{table}[ht]
	\begin{tabular}{llccc}
		State & Subspace & $\abs{\mean{\chsh}}$ & $\eta_c$ & Set $A^+$\\
           	$\ket{\psi_{2}}$ & $\{\ket{0},\ket{2}\}^{\otimes2}$& $2.037$ & $0.48$ & $[-1.13,1.13]$\\
		$\ket{\psi_{4}}$ & $\{\ket{0},\ket{2},\ket{4}\}^{\otimes2}$ & $2.109$ & $0.36$ & $[-0.90,0.90]$\\
		$\ket{\psi_{6}}$ & $\{\ket{0},\ket{2},\ket{4},\ket{6}\}^{\otimes2}$ & $2.170$ & $0.29$ & $[-0.77,0.77]$\\
		$\ket{\psi_{8}}$ & $\{\ket{0},\ket{2},\ket{4},\ket{6},\ket{8}\}^{\otimes2}$ & $2.212$ & $0.25$ & $[-0.70,0.70]$\\
	\end{tabular}
	\caption{Violation of the CHSH inequality for the states that attain the minimum detection efficiency $\eta_c$ for a given subspace. The values of $\eta_c$ are also given.}
	\label{CHSHinef}
	\end{table}
	
	We presented only subspaces that have even number of photons, because including odd Fock states does not lower the critical $\eta$, even though they do increase $\abs{\mean{\chsh}}$. Note that a large violation does not imply a small $\eta_c$. As an example, the state that maximizes the violation in subspace $\{\ket{0},\ket{2}\}^{\otimes2}$ has $\eta_c=0.66$.

	Finally, we calculated the critical efficiency of $\ket{\pi_-}$, equation \eqref{eq:pipm}, for the binnings $A^+=R^+$ and $A^+=[-\erf^{-1}1/2,\erf^{-1}1/2]$, which are $0.26$ and $0.55$, respectively, and also of the state $\ket{\Gamma_+}$, equation \eqref{eq:catoon}, which is $0.32$ for $\theta=1.12$ and $\alpha~=~2.36i$.

\subsection{Dark counts}

	It is important to notice that an efficiency $\eta<1$ does not affect measurements of the vacuum state. As a consequence, states with optimal $\eta_c$ for a given subspace have a very large amplitude in the $\ket{00}$ component, which implies large sensitivity to dark counts. To model dark counts, we assume that given the state $\ket{0}$, the photodetector has a probability $\delta$ to give the correct outcome $-1$, and probability $1-\delta$ to give the spurious outcome $+1$.

	Using the same ideas of the last section, we construct new POVM elements to model dark counts:
	\begin{subequations}
	\begin{align}
		F_+&:=\sum_{n=1}^{\infty}(1-(1-\eta)^n)\ketbra{n}{n} + (1-\delta) \ketbra{0}{0}, \\
		F_-&:=\delta\ketbra{0}{0}+\sum_{n=1}^{\infty} (1-\eta)^n\ketbra{n}{n}.
	\end{align}
	\end{subequations}
	So now we can generalize $D_\eta$ to $D_{\eta,\delta}:=F_+-F_-$, and as before define $\chsh(A^+,\eta,\delta)$.
		
	With it, we calculate the minimum $\delta$ for the states in table \ref{CHSHinef}, and find out that they are extremely sensitive to dark counts, since they have a very large vacuum amplitude. We thus look for states that are more robust to dark counts and still allow reasonable values for $\eta$ and CHSH violation (see table \ref{tab:goodstates}).   These states have smaller vacuum amplitude, higher entanglement, and higher CHSH violation than the ones presented in the previous section. 

	\begin{table}[ht]
	\begin{tabular}{ccccc}
		State &  Set $A^+$ & $\abs{\mean{\chsh}}$ & $\eta_{\phi}$ & $\delta_{\phi}$ \\
           	$\ket{\phi_{2}}$ & $[-0.66,0.66]$ & $2.30$ & $0.65$ & $0.92$ \\
		$\ket{\phi_{4}}$ & $[-0.49,0.49]$ &  $2.23$ & $0.45$ & $0.94$ \\
		$\ket{\phi_{6}}$ & $[-0.41,0.41]$ &  $2.20$ & $0.34$& $0.95$ \\
		$\ket{\phi_{8}}$ & $[-0.31,0.31]$ &  $2.15$ & $0.28$ & $0.96$\\
	\end{tabular}
	\caption{Examples of states achieving reasonable values for $\delta$, $\eta$, and $\abs{\mean{\chsh}}$. The explicit form of these states are shown in appendix \ref{sec:goodstates}.}
	\label{tab:goodstates}
	\end{table}

	In figure \ref{etac} we show the behaviour of $\abs{\mean{\chsh}}$ as a function of $\eta$ for these states.

		\begin{figure}[ht]
		\centering
	\begin{tikzpicture}
	\begin{axis}[
	xlabel=$\eta$,
	ylabel=$\abs{\mean{\chsh}}$,
	legend style={anchor=north west,cells={anchor=west},at={(0.05,.95)}}
	]

	\addplot[color=gray,mark=none] plot file {cat00neta};	\addlegendentry{$\ket{\Gamma_+}$}
	\addplot[color=blue,mark=none] plot file {phi2eta}; 	\addlegendentry{$\ket{\phi_{2}}$}
	\addplot[color=red,mark=none] plot file {phi4eta}; 	\addlegendentry{$\ket{\phi_{4}}$}
	\addplot[color=green,mark=none] plot file {phi6eta};	\addlegendentry{$\ket{\phi_{6}}$}
	\addplot[color=violet,mark=none] plot file {phi8eta};	\addlegendentry{$\ket{\phi_{8}}$}
	\addplot[black] coordinates {(0.278,2) (1,2)};

	\end{axis}
	\end{tikzpicture}
	\caption{$\abs{\mean{\chsh}}$ as a function of efficiency $\eta$ for $\ket{\Gamma_+}$ ($\theta=1.12$, $\alpha~=~2.36i$) and the states presented in table \ref{tab:goodstates}. The parameters in $\ket{\Gamma_+}$ were optimized to minimize the photodetection efficiency $\eta$ required for a CHSH violation.}
	\label{etac}
	\end{figure}

	\subsection{Transmittance}\label{sec:trans}

	Finally we study the effect of having a channel with transmittance $t\leq1$ connecting the source of the photons to the detectors. We model this effect as an amplitude damping channel \cite{chuang00}  
	\[\mathcal{E}(\rho)=\sum_k F_k\rho F_k^*,\]
	where
	\[F_k=\sum_{n=k}^N \sqrt{\binom{n}{k}} \sqrt{t^{n-k}(1-t)^k} \ketbra{n-k}{n}.\] 
The duality relation
		\[\tr\de{\chsh \de{\sum_{k,l} E_{kl}\rho E_{kl}^*}} = \tr\de{\de{ \sum_{k,l} E_{kl}^*\;\chsh \;E_{kl}}\rho},\]
	where $E_{kl}=F_k\otimes F_l$, allows us to define
	\[\chsh(A^+,\eta,t):=\sum_{kl} E_{kl}^* \; \chsh \; E_{kl}.\] 
	Now we can, as before, define the critical transmittance
	\[
	t_c(A^+):=\inf_{t}{\{t:\norm{\chsh(A^+,\eta=1,t)} > 2\}.}
	\]
	We found numerically $t_c$ for given subspaces (see table \ref{tab:trans}) and present the respective states in appedix \ref{sec:transstates}. Moreover, in table \ref{tab:transgoodstates} we show the minimum $t$ for $\ket{\Gamma}$ and the states presented in table \ref{tab:goodstates} (see figure \ref{fig:t}).
\begin{table}[ht]
	\begin{tabular}{llcccc}
		State & Subspace & $\abs{\mean{\chsh}}$ & $\eta_\xi$ & $t_c$ & Set $A^+$\\
	        $\ket{\xi_{2}}$ & $\{\ket{0},\ket{2}\}^{\otimes2}$& $2.18$ & $0.57$ & $0.78$ &$[-0.95, 0.95]$\\
	        $\ket{\xi_{4}}$ & $\{\ket{0},\ket{2},\ket{4}\}^{\otimes2}$ & $2.18$ & $0.57$ & $0.75$ &$[-0.95,0.95]$\\
	        $\ket{\xi_{6}}$ & $\{\ket{0},\ket{2},\ket{4},\ket{6}\}^{\otimes2}$ & $2.13$ & $0.58$& $0.74$ &$[-0.95,0.95]$\\
	        $\ket{\xi_{8}}$ & $\{\ket{0},\ket{2},\ket{4},\ket{6},\ket{8}\}^{\otimes2}$ & $2.07$ & $0.59$ & $0.74$& $[-0.95,0.95]$\\
	\end{tabular}
	\caption{Violation of the CHSH inequality for the states that attain the minimum transmittance $t_c$ for a given subspace and their minimum $\eta$. As in the inefficiency analysis, the inclusion of  odd numbers of photons does not lower the critical transmittance. These states are presented in appendix \ref{sec:transstates}.}\label{tab:trans}
	\end{table}

\begin{table}[ht]
	\begin{tabular}{ccccc}
		State & Set $A^+$ & $\abs{\mean{\chsh}}$ & $\eta_{\phi}$ & $t_{\phi}$ \\
		$\ket{\Gamma_+}$ & $[-0.48,0.48]$ & $2.38$ & $0.38$ & $0.88$ \\
           	$\ket{\phi_{2}}$  & $[-0.66,0.66]$ & $2.30$ & $0.65$ & $0.81$ \\
		$\ket{\phi_{4}}$  & $[-0.49,0.49]$ & $2.23$ & $0.45$ & $0.87$ \\
		$\ket{\phi_{6}}$  & $[-0.41,0.41]$ & $2.20$ & $0.34$& $0.91$ \\
		$\ket{\phi_{8}}$  & $[-0.31,0.31]$ & $2.15$ & $0.28$ & $0.95$\\
	\end{tabular}
	\caption{Transmittance of $\ket{\Gamma_+}$ and the states presented in table \ref{tab:goodstates}.}\label{tab:transgoodstates}
	\end{table}

\begin{figure}[ht]
		\centering
	\begin{tikzpicture}
	\begin{axis}[
	xlabel=$t$,
	ylabel=$\abs{\mean{\chsh}}$,
	legend style={anchor=north west,cells={anchor=west},at={(0.1,.9)}}
	]

	\addplot[color=gray,mark=none] plot file {cat00nt};	\addlegendentry{$\ket{\Gamma_+}$}
	\addplot[color=blue,mark=none] plot file {phi2t}; 	\addlegendentry{$\ket{\phi_{2}}$}
	\addplot[color=red,mark=none] plot file {phi4t}; 	\addlegendentry{$\ket{\phi_{4}}$}
	\addplot[color=green,mark=none] plot file {phi6t};	\addlegendentry{$\ket{\phi_{6}}$}
	\addplot[color=violet,mark=none] plot file {phi8t};	\addlegendentry{$\ket{\phi_{8}}$}
	\addplot[black] coordinates {(1,2) (0.812,2)};

	\end{axis}
	\end{tikzpicture}
	\caption{$\abs{\mean{\chsh}}$ as a function of transmittance $t$ for $\ket{\Gamma_+}$($\alpha=1.91$, $\theta=1.03$) and the states presented in table \ref{tab:goodstates}. The parameters in $\ket{\Gamma_+}$ were optimized to minimize the transmittance efficiency $t$ required for a CHSH violation.}
	\label{fig:t}
	\end{figure}

\section{Multipartite states}\label{sec:multipartite}

	Multipartite states can also be seen as interesting candidates for loophole-free Bell tests \cite{ecava07,acin09}. For instance the $N$-mode GHZ state $\ket{GHZ}=(\ket{0}^{\otimes N}+\ket{1}^{\otimes N})/\sqrt{2}$ was shown to attain an exponential violation of the $N$-partite Mermin inequality when only homodyne measurements are used \cite{acin09}. Those measurements are given by two orthogonal quadratures $X$ and $P$ followed by a sign binning process (\ie\ $A^+=B^+=...=N^+=\R^+$). 

	This result can be easily recovered within the framework developed here. As commented in section \ref{sec:methods}, in the present situation the measurement operators are proportional to Pauli $\sigma_x$ and $\sigma_y$ measurements. Noting that these operators are the optimal operators used in the violation of the Mermin inequality with the GHZ state, the violation of $\ket{GHZ}$ is given simply by 
	\[\de{\frac{2}{\pi}}^{N/2} 2^{\frac{N+1}{2}}.\]
	This is nothing but the standard GHZ violation multiplied by the term $(2/\pi)^{N/2}$ which comes from the norm of the measurement operators. 

	We have also considered the three-mode state $\ket{W} = \ket{001}+\ket{010}+\ket{100}$ and the Mermin inequality \cite{mermim80,zukowski02}
	\[
	\abs{\mean{DQQ}+\mean{QDQ}+\mean{QQD}-\mean{DDD}}\leq2.
	\]  
	We have found a violation of this inequality of $1+\frac{4}{\pi} \approx 2.29$ for $A^+=B^+=C^+=\R^+$. Unfortunately the minimum detection efficiency required in this case is $\eta_c=0.86$.

\section{Discussion and future directions}\label{sec:discussion}

	We studied CHSH inequalities that combine homodyne measurements and photodetection, where the quantum information is encoded in two modes of the electromagnetic field. First, we showed the maximum attainable violation for a given binning (set $A^+$). With this relation, we proved that maximal violation is possible in this hybrid scenario and characterized the family of states that attains it. Then we proceeded to seek states that had a good combination of feasibility, high violation, small efficiency and transmittance requirements, and were also robust to dark counts.

	Using a simple numerical technique we found the minimum photodetection efficiency necessary to obtain a violation given a limitation on the maximum number of photons in each channel. We found states that attained violation for photodetection efficiencies as small as $0.28$. In the same direction, we also showed the possibility of violations with transmittances of the order of $0.75$.

	We presented a state  ($\ket{\Gamma}$ in \eqref{eq:catoon}) that had the best combination we found of feasibility, reasonably high violation ($2.38$) and small efficiency and transmittance requirements ($\eta >0.32$ or $t >0.92$). This state is made of vacuum and cat-like superpositions of coherent states and therefore perfectly physical, although by no means easy to produce.

	Finally, we made a brief analysis of the multipartite scenario. Using the Mermin inequality we recovered the result of \cite{acin09} for the violations of the $N$-mode GHZ state and explored the tripartite $\ket{W}$ to find violations for $\eta>0.86$.

	The results presented here greatly enhance the possibilities of attaining Bell violations in this experimental setup. In particular, the reasonably low requirements both in detection efficiency and transmittance are rather promising. At this point, the greatest experimental quest seems to be the search for feasible states and their eventual realization in the lab. In that regard, there is lots of room for improvement, since the cases studied here represent just a small fraction of all the possibilities. 

	From a theoretical point of view, a natural development would be to improve our results by using other Bell inequalities, such as $Innmm$, as done in \cite{Tamas-Stefano-Nicolas}. Another approach would be to follow the work of Garbarino \cite{garbarino10} and find a state which requires vanishing photodetector efficiency to provide a loophole-free Bell violation.

\begin{acknowledgments}
The authors would like to  thank A. Cabello, F. Brandão, R.~O. Vianna, V. Scarani, and N. Brunner for helpful discussions. This work was supported by 
the Brazilian agencies Fapemig, Capes, CNPq, and INCT-IQ, the National Research Foundation and the Ministry of Education of Singapore.
\end{acknowledgments}
\appendix

\section{\texorpdfstring{$\norm{\chsh}=2\sqrt{1+\sin^2{\theta}}$}{||B||=2sqrt{1+sin^2t}}} \label{sec:russo}

	\begin{lemma}
		$D$ and $Q$ can be written as \[D=\Pi D\Pi+(\id-\Pi)D(\id-\Pi)\] 
					      \[Q=\Pi Q\Pi+(\id-\Pi)Q(\id-\Pi),\]
	where $\Pi$ is the projector onto the subspace generated by $\{\ket{0},Q\ket{0}\}$.
	\end{lemma}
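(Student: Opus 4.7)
The plan is to recast both identities as the single statement that $\Pi$ commutes with $D$ and with $Q$. Since both operators are self-adjoint, it suffices to show that the range of $\Pi$, i.e.\ the two-dimensional subspace $V:=\operatorname{span}\DE{\ket{0},Q\ket{0}}$, is invariant under $D$ and under $Q$: self-adjointness then transports the invariance to $V^\perp=\operatorname{range}(\id-\Pi)$, and expanding $A=(\Pi+(\id-\Pi))\,A\,(\Pi+(\id-\Pi))$ kills the cross terms and produces the claimed decomposition for each $A\in\DE{D,Q}$.

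I would check invariance of $V$ by acting on its two generators. For $Q$ the decisive fact is the reflection identity $Q^2=\id$, immediate from $Q=P_{Q+}-P_{Q-}$ with $P_{Q\pm}$ orthogonal projectors satisfying $P_{Q+}+P_{Q-}=\id$: this gives $Q\ket{0}\in V$ by definition, and $Q(Q\ket{0})=\ket{0}\in V$, so $V$ is $Q$-invariant. For $D$ I would use the equivalent form $D=\id-2\ketbra{0}{0}$, so that $D\ket{v}=\ket{v}-2\braket{0}{v}\ket{0}$ for any $\ket{v}$. Applied to the generators this yields $D\ket{0}=-\ket{0}$ and $D(Q\ket{0})=Q\ket{0}-2\cos\theta\,\ket{0}$, using $\bra{0}Q\ket{0}=2\int_{A^+}\varphi_0^2-1=\cos\theta$ from the definition of $\theta$; both vectors lie in $V$.

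No serious obstacle is expected: the argument reduces to the two algebraic observations $Q^2=\id$ and $D=\id-2\ketbra{0}{0}$, plus the standard bookkeeping that self-adjointness of an operator turns invariance of a subspace into simultaneous invariance of its orthogonal complement. The only minor caveat is the degenerate case $\sin\theta=0$ (so $\ket{0}$ and $Q\ket{0}$ are collinear and $\Pi$ has rank one); the same computations still go through verbatim in that case and the decomposition is trivial, since then $Q=\pm\id$ on the range of $\Pi$.
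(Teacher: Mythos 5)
Your proof is correct and follows essentially the same route as the paper's: both arguments rest on the invariance of $\operatorname{span}\DE{\ket{0},Q\ket{0}}$ under $Q$ (via $Q^2=\id$) and under $D$ (via $D=\id-2\ketbra{0}{0}$), and then invoke self-adjointness to pass the invariance to the orthogonal complement. Your explicit treatment of the degenerate rank-one case $\sin\theta=0$ is a small point of extra care not spelled out in the paper, but the substance is identical.
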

	\begin{proof}
	Note that $\{\ket{0},Q\ket{0}\}$ is an invariant subspace of both operators $D$ and $Q$, as
	\[Q\de{\alpha\ket{0}+\beta Q\ket{0}} = \alpha Q\ket{0} + \beta\ket{0}\]
	and 
	\begin{align*}
	D\de{\alpha\ket{0}+\beta Q\ket{0}} &= -\alpha \ket{0} + \beta DQ\ket{0} \\
					   &= -\alpha \ket{0} + \beta \de{-2\ketbra{0}{0}Q\ket{0}+Q\ket{0}} \\
					   &= -(\alpha+2\beta\exval{0}{Q}{0})\ket{0} + \beta Q \ket{0}.
	\end{align*}

	Since both $D$ and $Q$ are self-adjoint, it follows that the pre-image of $\{\ket{0},Q\ket{0}\}$ is also within $\{\ket{0},Q\ket{0}\}$, so the orthogonal decomposition is valid for both operators.
	\end{proof}

	Using this lemma one can check that\footnote{Just use the fact that $(\id-\Pi)D(\id-\Pi)=\id$.} $[Q,D]=[\Pi Q\Pi,\Pi D\Pi]$. That is, the only subspace relevant for a CHSH violation is the one generated by $\{\ket{0},Q\ket{0}\}$. Now we can restrict the domain of our operators to it and calculate the maximal attainable violation using Tsirelson's identity~\cite{tsirelson85}, $\norm{\chsh}^2=4+\norm{\comm{Q}{D}}^2$. Let's now understand how our observables act on this subspace.

        Expanding $Q$ in the Fock basis \eqref{eq:qdef}, we have 
	\[Q\ket{0} = \de{2\int_{A^+}\varphi_0^2-1}\ket{0} + \sum_{n=1}^\infty 2\int_{A^+}\varphi_0\varphi_n\ket{n}.\]
	Since $Q$ is unitary, it is useful to define 
	\begin{align*}
	\cos\theta &:= 2\int_{A^+}\varphi_0^2-1, \; \\
        \ket{\Xi} &:= \frac{1}{\sin\theta}\sum_{n=1}^\infty 2\int_{A^+}\varphi_0\varphi_n\ket{n}
	\end{align*}
	for $\theta \in (0,\pi)$, so that
	\[Q\ket{0} = \cos\theta\ket{0}+\sin\theta\ket{\Xi}.\]
	This allows us to write the restriction of $Q$ in the orthonormal basis $\DE{\ket{0},\ket{\Xi}}$ as:
	\[Q_R = \begin{pmatrix}
			\cos\theta & \sin\theta \\
			\sin\theta & -\cos\theta \\
	\end{pmatrix} = \cos\theta\,\sigma_z + \sin\theta\,\sigma_x.\]
	For the photodetection observable $D$, we simply notice that $D\ket{0}=-\ket{0}$ and $D\ket{\Xi}=\ket{\Xi}$ to see that restricted to the $\{\ket{0},\ket{\Xi}\}$ basis the operator $D$ is 
\begin{equation}
D_R=-\sigma_z.
\end{equation}

	With these forms of $Q_R$ and $D_R$, a straightforward calculation shows that 
	\[\norm{\chsh}^2=4+4\sin^2{\theta}.\]

	\section{\texorpdfstring{Effects of specifying $\mean{A_i\otimes B_j}$}{Effects of specifying <AB>}}
	\label{sec:skrzypczyk}

	\begin{theorem}\label{theo:skrypczyk}
		If $\abs{\mean{A_i \otimes B_j}} = 1$, for any given $i,j$, then $\max \abs{\mean{\chsh}} = 5/2$.
	\end{theorem}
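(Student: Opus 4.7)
The approach is to exploit the strong algebraic constraint imposed by $\abs{\mean{A_i\otimes B_j}}=1$. Using the symmetries of the CHSH operator (flipping signs of observables, swapping indices $0\leftrightarrow 1$, and considering $-\chsh$), I can reduce to the case $\mean{A_1\otimes B_1}=-1$ on a pure state $\ket{\psi}$, and upper-bound $\mean{\chsh}$. The engine of the argument is that the nonnegative operator $(A_1\otimes\id+\id\otimes B_1)^2 = 2(\id+A_1\otimes B_1)$ has vanishing expectation on $\ket{\psi}$, hence annihilates it, yielding the key relation
\[(A_1\otimes\id)\ket{\psi} = -(\id\otimes B_1)\ket{\psi}.\]

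Next, I recast $\mean{\chsh}$ in terms of three real numbers. Defining the unit vectors $\ket{\phi_1}:=(A_0\otimes\id)\ket{\psi}$, $\ket{\phi_2}:=(\id\otimes B_0)\ket{\psi}$, and $\ket{\chi}:=(A_1\otimes\id)\ket{\psi}$, and using the key relation to substitute $\id\otimes B_1$ by $-A_1\otimes\id$ wherever it appears, each of the four correlators becomes a pairwise inner product of these vectors. Setting $u:=\braket{\phi_1}{\phi_2}$, $v:=\braket{\phi_1}{\chi}$, and $w:=\braket{\chi}{\phi_2}$, a short computation gives $\mean{\chsh}=u-v+w+1$, while the only remaining constraint on $(u,v,w)\in\R^3$ is that the Gram matrix of $\DE{\ket{\phi_1},\ket{\phi_2},\ket{\chi}}$ be positive semidefinite, i.e.\
\[1+2uvw-u^2-v^2-w^2\geq 0.\]

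The final step is a convex optimization over this elliptope. After the reflection $v\mapsto -v$, the problem becomes to maximize $u+v+w$ over the permutation-symmetric convex region $\DE{(u,v,w): 1-2uvw-u^2-v^2-w^2\geq 0}$. Averaging any feasible optimizer over the $S_3$ coordinate-permutation action produces a symmetric feasible point with the same objective value, so I may restrict to the diagonal $u=v=w=t$; the constraint there factors as $(2t-1)(t+1)^2\leq 0$, forcing $t\leq 1/2$ and hence $\abs{\mean{\chsh}}\leq 5/2$. Tightness is exhibited by the singlet $(\ket{01}-\ket{10})/\sqrt{2}$ with $A_1=B_1=\sigma_z$ and $A_0,B_0$ chosen as spin directions in the $xz$-plane at angles $\pm 2\pi/3$ from $\hat z$. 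The main obstacle is this last convex optimization: one must justify that the symmetric point is the global maximum of the linear objective — which the symmetrization argument above secures — and then correctly identify $t=1/2$ as the relevant root of the cubic boundary.
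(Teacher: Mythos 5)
Your proof is correct, and although it opens the same way as the paper's --- converting the extremal correlator into a vector identity so that only three unit vectors remain in play --- it finishes by a genuinely different route. The paper takes $\braket{A_0}{B_0}=1$, identifies $\ket{A_0}=\ket{B_0}$ outright, and bounds the remaining three terms with the triangle and Cauchy--Schwarz inequalities, reducing everything to the one-variable maximization of $(1+x)+\sqrt{2(1-x)}$, which peaks at $x=1/2$ with value $5/2$. You instead encode all four correlators in the Gram matrix of the three surviving vectors and solve a linear optimization over the elliptope by $S_3$-symmetrization. Your version costs a little more machinery but buys more: it identifies the optimal configuration ($u=w=1/2$, $v=-1/2$), not just the optimal value, and you exhibit an explicit singlet-based state saturating the bound, whereas the paper's proof leaves attainability implicit. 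Two points of hygiene. First, positive semidefiniteness of a $3\times3$ unit-diagonal Gram matrix is the determinant condition \emph{together with} the $2\times2$ minors $\abs{u},\abs{v},\abs{w}\le 1$; the cubic inequality alone cuts out a larger, non-convex set on which your symmetrization step would not be justified, so carry the box constraints explicitly (the maximizer $u=v=w=1/2$ is unaffected). Second, the opening reduction to $\mean{A_1\otimes B_1}=-1$ deserves one line exhibiting a CHSH-preserving relabelling (e.g.\ $A_0\leftrightarrow A_1$ with $B_1\to -B_1$, followed by $B_0\leftrightarrow B_1$ with $A_1\to -A_1$). Neither point affects the validity of the argument.
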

	\begin{proof}
	The proof of this theorem is based on the ideas presented in \cite{dieks02}.

	Define 
	\begin{align*} 
	\ket{A_0} &:= A_0 \otimes \id \ket{\psi}, & \ket{B_0} &:= \id \otimes B_0\ket{\psi}, \\ 
	\ket{A_1} &:= A_1 \otimes \id \ket{\psi}, &  \ket{B_1} &:= \id \otimes B_1 \ket{\psi}.
	\end{align*} 
	So $\norm{\ket{A_i}} = \norm{\ket{B_i}} = 1$ and \[\exval{\psi}{\chsh}{\psi} = \braket{A_0}{B_0} + \braket{A_0}{B_1} + \braket{A_1}{B_0} - \braket{A_1}{B_1}.\]

	Now we choose $\braket{A_0}{B_0}=1$, the proof being the same for other $i,j$. So $\ket{A_0} = \ket{B_0}$ and we can write the expected value of the CHSH operator as 
	\begin{align*}
	\abs{\exval{\psi}{\chsh}{\psi}} &= \abs{ 1 + \braket{B_0}{B_1} + \braket{A_1}{B_0} - \braket{A_1}{B_1}}\\
	&\le \abs{ 1 + \braket{B_0}{B_1}} + \abs{\bra{A_1}\de{\ket{B_0}-\ket{B_1}}}\\
	&\le \abs{ 1 + \braket{B_0}{B_1}} + \norm{\ket{B_0}-\ket{B_1}}\\
	&= \abs{ 1 + \braket{B_0}{B_1}} + \sqrt{2}\sqrt{1-\braket{B_0}{B_1}}\\
	& \le 5/2.
	\end{align*}
	Note that $\braket{B_0}{B_1} = \braket{A_0}{B_1}$ is real, as an expected value of a self-adjoint operator, so we can pass from the third line to fourth.
	\end{proof}

	We can generalize this theorem by fixing the value of $\abs{\mean{A_i \otimes B_j}}$ and optimising with respect to the other correlation terms. By using this framework we can recover the above theorem, prove that if $\abs{\mean{A_i \otimes B_j}}=0$, then $\max \abs{\mean{\chsh}} = 3\sqrt{3}/2\approx 2.60$, or prove that $\abs{\mean{A_i \otimes B_j}} = 1/\sqrt{2}$ for all $i,j$ is a necessary condition for attaining the Tsirelson bound. The general result is presented in figure \ref{mmbound}.

	\begin{figure}[htb]
		\centering
	\begin{tikzpicture}
	\begin{axis}[
	axis x line=center,
	axis y line=center,
	ymin=2.5,ymax=2.849,
	xmin=-1.05,xmax=1.05,
	extra y ticks={2.82843},
	extra y tick labels={$2\sqrt{2}$},
	extra x ticks={-0.707107,0.707107},
	extra x tick labels={$-\frac{1}{\sqrt{2}}$,$\frac{1}{\sqrt{2}}$},
	]

	\addplot[color=blue,mark=none] plot file {mmdmm};

	\end{axis}
	\end{tikzpicture}
	\caption{$\max \abs{\mean{\chsh}}$ as a function of any expected value.}
	\label{mmbound}
	\end{figure}

\section{\texorpdfstring{States referenced in table \ref{tab0N}}{States referenced in table II}}\label{sec:noonstates}

\begin{multline*}
\ket{\chi_{1}}=0.22\ket{00}-0.66(\ket{01}+\ket{10})-0.28\ket{11}
\end{multline*}
\begin{multline*}
\ket{\chi_{2}}=-0.13\ket{00}-0.69(\ket{02}+\ket{20})+0.07\ket{22}
\end{multline*}
\begin{multline*}
\ket{\chi_{3}}=0.28\ket{00}-0.67(\ket{03}+\ket{30})+0.03\ket{33}
\end{multline*}
\begin{multline*}
\ket{\chi_{4}}=0.19\ket{00}-0.69(\ket{04}+\ket{40})-0.02\ket{44}
\end{multline*}

\section{\texorpdfstring{States referenced in table \ref{CHSHinef}}{States referenced in table IV}}\label{badstates}

\begin{multline*}
\ket{\psi_{2}}=0.98\ket{00}+0.17\ket{22}+0.03(\ket{02}+\ket{20})
\end{multline*}
\begin{multline*}
\ket{\psi_{4}}=0.96\ket{00}+0.19\ket{22}+0.05\ket{44}+0.07(\ket{02}+\ket{20})\\
-0.04(\ket{04}+\ket{40})-0.10(\ket{24}+\ket{42})
\end{multline*}
\begin{multline*}
\ket{\psi_{6}}=0.94\ket{00}-0.19\ket{22}-0.08\ket{44}-0.01\ket{66}\\
-0.10(\ket{02}+\ket{20})+0.06(\ket{04}+\ket{40})\\
-0.04(\ket{06}+\ket{60})+0.12(\ket{24}+\ket{42})\\
-0.07(\ket{26}+\ket{62})+ 0.04(\ket{46}+\ket{64})
\end{multline*}
\begin{multline*}
\ket{\psi_{8}} = 0.92\ket{00}-0.17\ket{22}-0.09\ket{44}-0.04\ket{66}+0.01\ket{88}\\
-0.12(\ket{02}+\ket{20})+0.08(\ket{04}+\ket{40})\\
-0.06(\ket{06}+\ket{60})+0.04(\ket{08}+\ket{80})\\
+0.12(\ket{24}+\ket{42})-0.08(\ket{26}+\ket{62})\\
+0.05(\ket{28}+\ket{82})+0.06(\ket{46}+\ket{64})\\
-0.03(\ket{48}+\ket{84})+ 0.01(\ket{68}+\ket{86})
\end{multline*}

\section{\texorpdfstring{States referenced in table \ref{tab:goodstates}}{States referenced in table V}}\label{sec:goodstates}

\begin{multline*}
\ket{\phi_{2}}=0.22\ket{00}-0.69(\ket{02}+\ket{20})-0.01\ket{22}
\end{multline*}
\begin{multline*}
\ket{\phi_{4}}=0.31\ket{00}+0.26(\ket{02}+\ket{20})-0.62(\ket{04}+\ket{40})\\
\end{multline*}
\begin{multline*}
\ket{\phi_{6}}=0.38\ket{00}+0.17(\ket{02}+\ket{20})-0.30(\ket{04}+\ket{40})\\
+0.56(\ket{06}+\ket{60})
\end{multline*}
\begin{multline*}
\ket{\phi_{8}}=0.42\ket{00}+0.13(\ket{02}+\ket{20})-0.20(\ket{04}+\ket{40})\\
+0.32(\ket{06}+\ket{60})-0.51(\ket{08}+\ket{80})
\end{multline*}

\section{\texorpdfstring{States referenced in table \ref{tab:trans}}{States referenced in table VII}}\label{sec:transstates}

\begin{multline*}
\ket{\xi_{2}}=-0.91\ket{00}+0.07(\ket{02}+\ket{20})+0.40\ket{22}
\end{multline*}
\begin{multline*}
\ket{\xi_{4}}=-0.83\ket{00}+0.36\ket{22}-0.40\ket{44}+0.07(\ket{02}+\ket{20})\\
-0.03(\ket{04}+\ket{40})-0.10(\ket{24}+\ket{42})
\end{multline*}
\begin{multline*}
\ket{\xi_{6}}=0.67\ket{00}-0.29\ket{22}+0.30\ket{44}-0.45\ket{66}\\
-0.05(\ket{02}+\ket{20})+0.03(\ket{04}+\ket{40})\\
-0.01(\ket{06}+\ket{60})+0.11(\ket{24}+\ket{42})\\
-0.04(\ket{26}+\ket{62})- 0.25(\ket{46}+\ket{64})
\end{multline*}
\begin{multline*}
\ket{\xi_{8}} = 0.50\ket{00}-0.22\ket{22}+0.22\ket{44}-0.10\ket{66}+0.43\ket{88}\\
-0.03(\ket{02}+\ket{20})+0.02(\ket{04}+\ket{40})\\
-0.01(\ket{06}+\ket{60})+0.01(\ket{08}+\ket{80})\\
+0.09(\ket{24}+\ket{42})-0.04\ket{26}+\ket{62})\\
+0.02(\ket{28}+\ket{82})-0.21(\ket{46}+\ket{64})\\
+0.10(\ket{48}+\ket{84})+ 0.41(\ket{68}+\ket{86})
\end{multline*}

\bibliographystyle{linksen}
\bibliography{biblio}
\end{document}